\documentclass[11pt]{article}

\usepackage[margin=1in]{geometry}
\usepackage[utf8]{inputenc}
\usepackage[T1]{fontenc}
\usepackage[hyphens]{url}
\usepackage{graphicx}
\usepackage{booktabs}
\usepackage{amsmath}
\usepackage{amssymb}
\usepackage{amsthm}
\usepackage{algorithm}
\usepackage{algorithmic}
\usepackage{caption}
\usepackage[round,authoryear]{natbib}
\usepackage{hyperref}
\hypersetup{colorlinks=true,linkcolor=blue,citecolor=blue,urlcolor=blue}
\newtheorem{proposition}{Proposition}

\newtheorem{assumption}{Assumption}
\newtheorem{remark}{Remark}

\newcommand{\epsmax}{\varepsilon_{\max}}
\newcommand{\sigmahat}{\hat{\sigma}}
\newcommand{\Bt}{B_t}
\newcommand{\csrnr}{CS-RNR}

\title{Agents That Certify Their Own Exploits:\\
Confidence-Scheduled Restricted Responses\\
for Safe Opponent Exploitation}

\author{Boning Li\thanks{IIIS, Tsinghua University.}
\and Longbo Huang\thanks{IIIS, Tsinghua University. Corresponding author.}}

\date{\today}

\begin{document}

\maketitle

\begin{abstract}
An agent playing a Nash-equilibrium strategy in a two-player zero-sum
imperfect-information game secures the game value but forfeits the additional
value offered by a flawed opponent. Diffuse deviations pose a particular challenge: binary
release rules may gather too little evidence to act, while a full best
response to an incomplete opponent model can be highly exploitable. We
introduce \emph{budget-constrained confidence-scheduled restricted responses}
(\csrnr{}), the first opponent-exploitation method whose safety guarantee is a
certificate the agent computes on the strategy it actually deploys, so that
every exploit it commits to is one it has audited itself.
The method tracks pooled action frequencies with anytime-valid
confidence sequences and treats a frequency as exploitable only once its
interval separates from an equilibrium reference. The confirmed deviations
define a conservative opponent model, which a restricted-response solve turns
into candidate counter-strategies over a grid of pin levels. Before deployment,
each complete candidate is evaluated by a full-tree best response. The
resulting certificate is compared with a user-specified budget and committed
atomically with the strategy. Because this check is performed on the played
strategy, model quality determines the exploitation achieved while the
certificate controls reference-relative expected loss. In Leduc hold'em,
\csrnr{} obtains $6.2\times$ the steady-state gain of a money-verified binary
gate while keeping every deployed strategy within budget. A trajectory
mixture using the same estimator reaches $13.6\times$ the budget. Across
Leduc, Liar's Dice, and 5-rank Leduc, all 36{,}000 audited hands satisfy the
reported certificate tolerance.

\end{abstract}

\section{Introduction}

Two-player zero-sum imperfect-information games admit a robust default: a
Nash-equilibrium strategy guarantees the game value against any opponent.
This maximin property underlies the superhuman poker agents of the last
decade \citep{bowling2015heads,moravvcik2017deepstack,brown2018superhuman,
brown2019superhuman}.
Against a fixed non-equilibrium opponent, however, the same strategy can
forgo most of the available value \citep{southey2005bayes,hoehn2005effective}.
An exploiter can recover that value by
deviating from equilibrium, but the deviation also creates opportunities for
counter-exploitation \citep{mccracken2004safe,johanson2007computing,
ganzfried2015safe}. We study how to allocate a fixed exploitability budget
from the games observed online.

The problem is pronounced for \emph{diffuse} deviations, where behaviour
moves slightly away from equilibrium across many decision points rather than
failing at one conspicuous action. Such deviations arise naturally in human
play \citep{southey2005bayes,ge2022modeling} and imperfectly converged agents
\citep{zinkevich2007regret,tammelin2014solving}. On our Leduc suite, a money-verified
binary gate releases on none of the 12 diffuse opponents at the main horizon
and continues paying for probes while it waits. A continuous schedule that
immediately best-responds to a confidence-bound model does act, but reaches
$-0.66$ chips per hand against one diffuse opponent. Both failures stem from
partial identification: the gate demands enough evidence for a full release,
whereas the continuous schedule deploys an extreme response before the rest
of the opponent strategy is known.

We propose confidence-scheduled restricted responses (\csrnr{}). The method
uses anytime-valid confidence sequences
\citep{howard2021time,grunwald2024proposer} to
identify pooled action frequencies that separate from an equilibrium
reference. Confirmed excesses define a conservative opponent model. A
restricted-response solve
\citep{johanson2007computing,johanson2009data,ponsen2010mcrnr}
then produces a candidate strategy at pin level $p$, interpolating between
the equilibrium reference and a model-specific response. A full-tree best
response evaluates the original-game downside of the complete candidate.
Only candidates whose certificate is within the budget $\epsmax$ are
committed for play. Figure~\ref{fig:method-loop} summarizes this online loop.

\begin{figure}[t]
\centering
\includegraphics[width=0.56\textwidth]{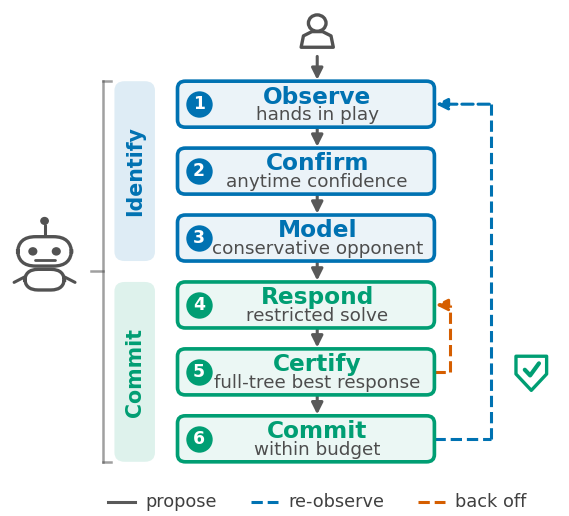}
\caption{The \csrnr{} deployment loop. Online observations update
anytime-valid confidence sequences and a conservative opponent model. A
restricted solve proposes a complete candidate at pin level $p$. The
original-game full-tree best response supplies its certificate $B$; only a
candidate satisfying $B\le\epsmax$ is committed with that certificate.
Failed candidates return to a lower pin level, and deployed play supplies the
next observations.}
\label{fig:method-loop}
\end{figure}

The restricted-response frontier explains why the pin level should be
scheduled from data. Against a strong diffuse deviator, $p{=}0.5$ captures
$46\%$ of the best-response gain at $6\%$ of its exploitability, while the
frontier changes substantially across opponents. Across 24 scripted Leduc
opponents, \csrnr{} earns $6.2\times$ the steady-state gain of the binary
gate and keeps every deployed certificate within budget. The same comparison
yields a $5.8\times$ gain ratio on Liar's Dice. A trajectory mixture using
the same estimator reaches $13.6\times$ the Leduc budget. Adversarial audits
covering 36{,}000 hands across Leduc, Liar's Dice, and 5-rank Leduc find no
certificate violation. The larger Leduc game exhibits confirmation
starvation at short horizons; longer horizons increase the number of
confirmed opponents while all deployed strategies remain within budget.

This relocates safety from the model to the strategy, and in doing so makes the
agent the auditor of its own exploits. Prior safe-exploitation guarantees are
earned by how the response is built
\citep{mccracken2004safe,johanson2009data,ganzfried2015safe,brown2017safe},
so they degrade exactly when the opponent
model is wrong: the agent inherits a promise made on its behalf by the
procedure that proposed its strategy. Our agent instead measures the
certificate on the complete candidate after it is built and before it is
played, and withholds any candidate that fails. That check is immune to model
error by construction, so a mistaken model can cost gain, never safety. The
distinction is measurable. Fixed-Mix uses the identical estimator and pin
level yet plays a strategy of exploitability $2.04$ against \csrnr{}'s $0.15$,
overshooting its budget by $14\times$, $4.4\times$, and $10\times$ across our
three games, because a trajectory mixture inherits the extreme exploitability
of the best response it mixes in.

Our contributions are:
\begin{itemize}
\item We develop \csrnr{}, the first online exploitation algorithm to combine
  anytime-valid deviation confirmation, conservative opponent modeling,
  restricted-response solving, and atomic deployment under a computed budget
  certificate.
\item We establish a runtime invariant that bounds the conditional expected
  loss of every deployed strategy. Unlike safety analyses that bound loss
  through the construction of the response, our invariant is evaluated on the
  deployed strategy itself, and the analysis separates this strategy-level
  certificate from the statistical conditions governing deviation detection.
\item We evaluate six deployment rules across three games, including adaptive
  audits, budget sweeps, and horizon scaling. The results identify safe gain
  and certificate separation from fixed mixtures, and establish confirmation
  starvation as a distinct, data-dependent limit of safe exploitation.
\end{itemize}

\section{Related Work}

\paragraph{Safe opponent exploitation.}
\citet{mccracken2004safe} bound the loss an agent-modelling strategy may incur
relative to the equilibrium guarantee, and \citet{hoehn2005effective} measure
the exploration cost of exploiting an unknown opponent over a short match.
\citet{ganzfried2015safe} study repeated zero-sum play with strategies that
trade exploitation against an explicit safety allowance informed by an
opponent model and match history, and \citet{bernasconi2021exploiting} exploit
opponents in sequential games subject to utility constraints. Restricted Nash
responses
\citep{johanson2007computing} and data biased responses
\citep{johanson2009data} compute robust counter-strategies by pinning the
opponent to a model with probability $p$, and sampling makes the pinned solve
practical on larger trees \citep{ponsen2010mcrnr}. We use this pinned-opponent
solve as the response primitive, evaluate each candidate by an original-game
best response, and schedule the measured downside. Safe subgame solving bounds the loss a
re-solve may add to a blueprint \citep{brown2017safe}, and remains sound under
a depth limit \citep{brown2018depth}; subgame methods
\citep{liu2022safe,ge2023efficient,ge2024safe} carry this control into
exploitative re-solving, \citet{liu2023opponent} restrict the opponent during
online search, \citet{wang2025horse} organize exploitation hierarchically, and
\citet{milec2025adapting} study counter-strategies under depth limits; at poker
scale the cost of each such solve is governed by abstraction, pruning
\citep{brown2015regret,brown2017dynamic}, and parallel solving
\citep{li2024rl,li2025efficient,li2026effective,li2026real}. We work
in solver-tractable games, where the certificate is an exact full-tree quantity,
and schedule a whole-game policy between repeated hands.

\paragraph{Learning-based opponent modeling.}
Offline population training can produce exploiters that infer an opponent
online, including greedy-when-sure policies with Bayesian-style inference
\citep{fu2022greedy}, rationality models fitted to unknown agents
\citep{ge2022modeling}, in-context identification from interaction history
\citep{jing2023towards,jing2024opponent,jing2025open}, and poker exploiters
trained to punish suboptimal play
\citep{murgoci2026alphaexploitem,caen2026stratformer}; see
\citet{nashed2022survey} for a survey. This line emphasizes adaptation and
empirical return.
Across this line the agent acts on a learned model of its opponent and its
exposure is whatever that model implies, which is the coupling our certificate
removes.
We instead use online frequency evidence to propose candidates and an
original-game certificate to govern their deployment, without offline
training.

\paragraph{Sequential testing in games.}
Anytime-valid confidence sequences
\citep{howard2021time,grunwald2024proposer} support continuous monitoring
with time-uniform error control. Recent work applies sequential
tests to strategic-deviation monitoring in multi-agent systems
\citep{gauthier2026betting}, punishment triggers in repeated games
\citep{capitaine2026test}, and testing whether an agent follows a given
mixed strategy \citep{ganzfried2025nonparametric}. Consistent opponent
modeling of static opponents has also been studied
\citep{DBLP:journals/corr/abs-2508-17671}. We use sequential evidence
to construct a conservative model, then solve and certify the resulting
strategy before deployment.

\section{Preliminaries}

\paragraph{Setting.}
We consider a two-player zero-sum extensive-form game with imperfect
information and perfect recall. A behavioural strategy $\sigma_i$ maps
each information set $I$ of player $i$ to a distribution over legal
actions; $\sigma=(\sigma_0,\sigma_1)$ denotes a profile and
$u_i(\sigma)$ player $i$'s expected value. The game value to player $i$
of a Nash equilibrium $\sigma^*$ is $v_i^*$, and the
\emph{exploitability} of a strategy $\sigma_i$ is
\[
\mathrm{expl}(\sigma_i) \;=\;
\max_{\sigma_{-i}} u_{-i}(\sigma_i,\sigma_{-i}) - v^*_{-i}
\;\ge\; 0,
\]
the amount a best-responding adversary wins beyond its game value. A Nash
strategy has zero exploitability; playing it forfeits, against a fixed
non-equilibrium opponent $\sigma_{\mathrm{opp}}$, up to the
\emph{best-response gain}
$g^*(\sigma_{\mathrm{opp}})=\max_{\sigma_i}
u_i(\sigma_i,\sigma_{\mathrm{opp}})-v^*_i$.

\emph{Repeated play.} The agent (hero) plays $T$ hands against an unknown
opponent, observing the public actions of each hand. Hero may change its
strategy between hands. We report per-hand gain over the Nash floor,
$u_i(\sigma_i^{(t)},\sigma_{\mathrm{opp}})-v_i^*$, computed exactly for
the strategy played at hand $t$. Probe and exploration costs are therefore
included in every curve.

\paragraph{Restricted responses.}
A restricted Nash response \citep[RNR]{johanson2007computing} against a
model $\sigmahat$ is the hero side of an equilibrium of the modified game
in which, with probability $p$, the opponent is forced to play $\sigmahat$,
and with probability $1-p$ plays freely. As $p$ ranges over $[0,1]$, the
hero strategy interpolates from Nash to $\mathrm{BR}(\sigmahat)$. Data
biased responses \citep[DBR]{johanson2009data} apply the pin per information
set rather than per trajectory. We use the behavioural DBR form: the
restricted seat plays $p\,\sigmahat_I + (1-p)\,\sigma^{\mathrm{rm}}_I$ at
each information set $I$, where $\sigma^{\mathrm{rm}}$ is the free
regret-matching strategy. This form accepts a per-infoset vector of pin
probabilities, the granularity required by our scheduler. We use
``restricted response'' and the RNR initials for the family of
pinned-opponent counter-strategies. Our implementation modifies two lines of
a CFR$^+$ solver \citep{zinkevich2007regret,tammelin2014solving}: the
restricted seat uses the $p$ mixture above, and its instantaneous regret is
measured against the free component's value. Feasible deviations of the free
agent scale the regret vector by the constant $(1-p)$, which regret matching
ignores. Any regret minimizer serves as the primitive: predictive updates
\citep{farina2021faster,farina2023regret,meng2025efficient}, dynamic
discounting and scheduled or learned hyperparameters
\citep{brown2019solving,xu2024dynamic,sychrovsky2024learning,meng2026faster},
deep realizations \citep{brown2019deep,brown2020combining,xu2026deep}, and
variance-reduced chance sampling
\citep{li2026correlatedchancesamplingmonte} all return a strategy our
certificate then evaluates unchanged.

\section{Budget-Constrained Confidence-Scheduled Restricted Responses}
\label{sec:method}

\csrnr{} has three components. Detection identifies supported deviations,
response converts them into a candidate strategy, and certification controls
deployment. Certification evaluates the candidate itself, so detection and
modeling determine the value captured while the deployment bound remains
strategy-based (Figure~\ref{fig:method-loop}). Algorithm~\ref{alg:csrnr} is the
episode loop, and Algorithms~\ref{alg:detect} and~\ref{alg:sched} expand the two
steps it calls.

\begin{algorithm}[t]
\caption{\csrnr{}: one episode of $T$ hands against one opponent. The state
carried across hands is the played strategy $\sigma_H$, its certificate $B$,
the pin index $k$, and the pooled counts $n$.}
\label{alg:csrnr}
\begin{algorithmic}[1]
\REQUIRE equilibrium $\sigma^*$; reference value
$\widetilde v_{\mathrm{opp}}$; budget $\epsmax$; pin grid
$0=p_0<\cdots<p_K$; checkpoints $\mathcal{T}$; climb cap $c=2$
\STATE $\sigma_H\gets\sigma^*$;\ $k\gets 0$;\
$B\gets\widetilde B(\sigma^*)$;\ $n\gets\mathbf{0}$
\FOR{$t=1$ \TO $T$}
  \STATE play hand $t$ with $\sigma_H$;\ $\Bt\gets B$
  \STATE add the opponent's actions in hand $t$ to the pooled counts $n$
  \IF{$t\in\mathcal{T}$}
    \STATE $\mathcal{E}\gets\textsc{Detect}(n)$
      \COMMENT{Algorithm~\ref{alg:detect}}
    \IF{$\mathcal{E}=\emptyset$}
      \STATE $k\gets 0$;\ $\sigma_H\gets\sigma^*$;\
      $B\gets\widetilde B(\sigma^*)$
    \ELSE
      \STATE $\sigmahat\gets\textsc{Model}(\mathcal{E})$
        \COMMENT{Algorithm~\ref{alg:detect}}
      \STATE $(k,\sigma_H,B)\gets\textsc{Schedule}(\sigmahat,k)$
        \COMMENT{Algorithm~\ref{alg:sched}}
    \ENDIF
  \ENDIF
\ENDFOR
\end{algorithmic}
\end{algorithm}

\subsection{Detection: Anytime-Valid Deviation Excesses}
\label{sec:method-detect}

Observations are aggregated into \emph{pools}: information sets sharing a
round, a card or hand-strength category, and whether the seat is facing a
bet. In Leduc this gives 12 pools, preflop by private card and postflop by
hand-strength category, each split by whether the seat faces a raise. Each
pool carries an equilibrium reference frequency $f^*$ computed by a
reach-weighted pass of the same estimator over Nash-vs-Nash play, so the
monitored statistic and its null anchor share one construction. After every
hand, the opponent's observed actions update pooled counts. At
scheduling checkpoints the agent scans every (pool, action) cell with a
time-uniform confidence sequence for the pool frequency. We use stitched
sub-Gaussian or empirical-Bernstein boundaries
\citep{howard2021time} at per-cell level
$\alpha_{\mathrm{cell}}{=}\alpha/m$, Bonferroni-split across the $m{=}30$
scanned cells of 12 pools. The probability that any cell is ever falsely
flagged during the episode is therefore at most $\alpha{=}0.05$. A cell is
flagged when its interval lies beyond the pool's equilibrium frequency $f^*$
by more than margin $\delta{=}0.10$:
\[
[\hat f-w_n,\hat f+w_n]\cap[f^*-\delta,f^*+\delta]=\emptyset,
\]
where $w_n$ is the boundary half-width after $n$ observations of the pool at
level $\alpha_{\mathrm{cell}}$. Time uniformity permits evaluation after each
hand under the predictable sampling conditions stated in
Section~\ref{sec:theory}.

For a flagged cell, the \emph{signed conservative excess} is
$e=\mathrm{endpoint}-f^*$, where the endpoint is the interval edge nearest
the equilibrium frequency. This is the smallest deviation retained by the
interval. At every information set $I$ in a flagged pool $P(I)$, the model is
\[
\sigmahat_I(a)\propto
\mathrm{clip}\!\left(\sigma^*_I(a)+e_{P(I),a},0,1\right),
\]
with unflagged actions renormalized proportionally to the equilibrium
reference; elsewhere the model equals the reference. The additive key-level
shift preserves the confirmed direction across member information sets with
heterogeneous equilibrium frequencies. Substituting a common absolute pool
endpoint can reverse that direction when a member's local reference frequency
already exceeds the endpoint. Algorithm~\ref{alg:detect} states both steps.

\begin{algorithm}[t]
\caption{\textsc{Detect} and \textsc{Model}. The scan is two-sided and
$\alpha$ is split across all scanned cells, so the episode-wide false-flag
probability is at most $\alpha$. Boundary constants are in the appendix.}
\label{alg:detect}
\begin{algorithmic}[1]
\REQUIRE cells $\mathcal{C}=\{(P,a)\}$; references $f^*$; level $\alpha$;
margin $\delta$
\STATE \textbf{\textsc{Detect}($n$):}
\STATE $\alpha_{\mathrm{cell}}\gets\alpha/|\mathcal{C}|$;\
$\mathcal{E}\gets\emptyset$
\FORALL{$(P,a)\in\mathcal{C}$ with observed pool mass $n_P>0$}
  \STATE $\hat f\gets n_{P,a}/n_P$;\
  $w\gets w_{n_P}(\alpha_{\mathrm{cell}})$
  \IF{$\hat f+w<f^*_{P,a}-\delta$}
    \STATE $e_{P,a}\gets(\hat f+w)-f^*_{P,a}$
      \COMMENT{$<0$: plays $a$ less than equilibrium}
  \ELSIF{$\hat f-w>f^*_{P,a}+\delta$}
    \STATE $e_{P,a}\gets(\hat f-w)-f^*_{P,a}$
      \COMMENT{$>0$: plays $a$ more}
  \ENDIF
\ENDFOR
\RETURN $\mathcal{E}$, the flagged cells and their excesses
\item[]
\STATE \textbf{\textsc{Model}($\mathcal{E}$):}
\STATE $\sigmahat\gets\sigma^*$
\FORALL{$I$ with $P(I)$ flagged}
  \STATE $F\gets\{a:(P(I),a)\in\mathcal{E}\}$
  \STATE $\sigmahat_I(a)\gets
  \mathrm{clip}(\sigma^*_I(a)+e_{P(I),a},0,1)$ for $a\in F$
  \STATE $r\gets\max(1-\sum_{a\in F}\sigmahat_I(a),0)$;\
  $Z\gets\sum_{a\notin F}\sigma^*_I(a)$
  \STATE $\sigmahat_I(a)\gets r\,\sigma^*_I(a)/Z$ for $a\notin F$
    \COMMENT{uniform $r/|{\cdot}|$ if $Z=0$}
  \STATE normalize $\sigmahat_I$
\ENDFOR
\RETURN $\sigmahat$
\end{algorithmic}
\end{algorithm}

\begin{algorithm}[t]
\caption{\textsc{Schedule}: bounded local search over pin levels. Every
level reached is certified individually, and the returned pair is the one
committed, so no level is played before its own certificate clears
$\epsmax$.}
\label{alg:sched}
\begin{algorithmic}[1]
\REQUIRE model $\sigmahat$; current index $k$; grid $p_0,\dots,p_K$;
budget $\epsmax$; climb cap $c=2$; solve iterations $N=400$
\STATE \textbf{\textsc{Eval}($p$):} \textbf{if} $p=0$ \textbf{return}
$(\sigma^*,\widetilde B(\sigma^*))$; solve the restricted game at pin $p$
for $N$ CFR$^+$ iterations to get $\sigma_H(p)$;
\textbf{return} $(\sigma_H(p),\widetilde B(\sigma_H(p)))$
\item[]
\STATE $(\sigma,B)\gets\textsc{Eval}(p_k)$
  \COMMENT{re-certify the level in force against the updated model}
\WHILE{$B>\epsmax$ \AND $k>0$}
  \STATE $k\gets k-1$;\ $(\sigma,B)\gets\textsc{Eval}(p_k)$
\ENDWHILE
\STATE $j\gets 0$
\WHILE{$B\le\epsmax$ \AND $k<K$ \AND $j<c$}
  \STATE $(\sigma',B')\gets\textsc{Eval}(p_{k+1})$;\ $j\gets j+1$
  \IF{$B'>\epsmax$}
    \STATE \textbf{break}
      \COMMENT{candidate discarded; nothing is committed}
  \ENDIF
  \STATE $k\gets k+1$;\ $(\sigma,B)\gets(\sigma',B')$
\ENDWHILE
\RETURN $(k,\sigma,B)$
\end{algorithmic}
\end{algorithm}

\subsection{Response: Restricted Solve at Pin Level $p$}

Given $\sigmahat$, the agent solves the restricted game at pin level $p$ to
obtain $\sigma_H(p)$. The opponent seat follows $\sigmahat$ with weight $p$
and remains free with weight $1-p$, producing a hedged behavioural strategy
rather than an unrestricted best response to the model. The offline frontier
in Figure~\ref{fig:frontier} shows that substantial gain can occur at low
exploitability and that the useful pin level varies by opponent.

\subsection{Certification: The Budget Rule}
\label{sec:method-cert}

For a complete candidate strategy $\sigma_H$, the agent computes the
reference-relative certificate
\[
\widetilde B(\sigma_H)
=\max_{\sigma_{\mathrm{opp}}}
u_{\mathrm{opp}}(\sigma_H,\sigma_{\mathrm{opp}})
-\widetilde v_{\mathrm{opp}},
\]
where the maximum is an original-game full-tree best-response evaluation and
$\widetilde v$ is the fixed CFR-derived reference value. The computation is
exact for the played strategy in the solver-tractable games studied here;
Section~\ref{sec:theory} gives the correction for finite-reference error.

The scheduler (Algorithm~\ref{alg:sched}) searches the grid
$p\in\{0,0.1,0.2,0.3,0.5,0.7,0.9\}$ at checkpoints that thin out over the
episode, hands $\{40,80,160,320,480,640\}$ for $T{=}800$. At each checkpoint it
re-certifies
the current level against the updated model, steps down while the budget is
violated, and tests at most two higher levels. Each candidate reached by this
bounded local search is certified individually by one exact best-response pass
over a restricted solve of 400 CFR$^+$ iterations, so the rule does not require
monotonic exploitability in $p$ or a globally maximal feasible grid point.
Between checkpoints, hero plays the accepted $\sigma_H$ unchanged. The
certificate in force at hand $t$, denoted $B_t$, is therefore computed for
the strategy played at that hand.

The deployed state is an atomic pair $(\sigma_H,B)$. A complete candidate is
held in a temporary buffer and committed only after satisfying
$B\le\epsmax$. Failed candidates are discarded, and execution applies no
mixing, truncation, or missing-information-set fallback after certification.
Statistical error can select an unprofitable candidate, but its certificate
is computed without $\sigmahat$. The parameter $\epsmax$ controls the
continuum between Nash play and the unrestricted schedule.

\section{Guarantees}
\label{sec:theory}

The analysis separates the deployment invariant from the statistical
conditions used to identify profitable deviations.

\begin{proposition}[Runtime certificate invariant]
\label{prop:cert}
Suppose (a) at every hand $t$ the runtime maintains an atomic deployed pair
$(\sigma_H^{(t)},B_t)$, where $B_t$ is computed by a full-tree best response
for the complete behavioural strategy $\sigma_H^{(t)}$, and the executor
plays that strategy without post-certification modification; and (b) the
schedule admits a candidate only if $B_t\le\epsmax$. If $B_t$ is defined
relative to the exact game value, then for every opponent strategy
$\sigma^{(t)}_{\mathrm{opp}}$ at hand $t$, including one selected adaptively
with knowledge of $\sigma_H^{(t)}$ and the agent state, the conditional
expected value satisfies
\[
u_{\mathrm{hero}}(\sigma_H^{(t)},\sigma^{(t)}_{\mathrm{opp}})
\ge v^*_{\mathrm{hero}}-B_t\ge v^*_{\mathrm{hero}}-\epsmax.
\]
Consequently, expected mean per-hand gain over any horizon is at least
$-\epsmax$. Realized winnings also contain the game's per-hand variance.
\end{proposition}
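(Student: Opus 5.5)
The plan is to reduce everything to the definition of exploitability together with the zero-sum identity. Fix a hand $t$ and condition on the history up to the start of hand $t$, meaning all past observations, the agent state, and the deployed pair $(\sigma_H^{(t)},B_t)$. By hypothesis (a), the pair is measurable with respect to this history, and the strategy actually executed in hand $t$ is exactly $\sigma_H^{(t)}$. Any adaptive opponent, even one that knows $\sigma_H^{(t)}$ and the agent state, then plays some behavioural strategy $\sigma^{(t)}_{\mathrm{opp}}$ that is fixed given this conditioning. If the opponent randomizes over strategies, the conditional expected value is an average of $u_{\mathrm{hero}}(\sigma_H^{(t)},\cdot)$ over pure choices. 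Perfect recall (Kuhn) lets us treat this mixture as a behavioural strategy, so it suffices to bound $u_{\mathrm{hero}}(\sigma_H^{(t)},\sigma)$ uniformly over all $\sigma$.

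The key step is as follows. Since the game is zero-sum, $u_{\mathrm{hero}}(\sigma_H^{(t)},\sigma)=-u_{\mathrm{opp}}(\sigma_H^{(t)},\sigma)$ and $v^*_{\mathrm{hero}}=-v^*_{\mathrm{opp}}$. With $B_t$ defined relative to the exact game value, it equals $\max_{\sigma'}u_{\mathrm{opp}}(\sigma_H^{(t)},\sigma')-v^*_{\mathrm{opp}}=\mathrm{expl}(\sigma_H^{(t)})$. Hypothesis (a) guarantees that this maximum is computed by a full-tree best response on the same complete strategy that is played. Hence for every $\sigma$ we have $u_{\mathrm{opp}}(\sigma_H^{(t)},\sigma)\le v^*_{\mathrm{opp}}+B_t$. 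Negating gives $u_{\mathrm{hero}}\ge v^*_{\mathrm{hero}}-B_t$. Hypothesis (b), together with the atomic commit, gives $B_t\le\epsmax$, which yields the second inequality.

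This last step needs a small case check on Algorithms~\ref{alg:csrnr} and~\ref{alg:sched}, and I expect it to be the main place where care is needed. There are three kinds of deployed pair. The initial pair and the reset pair are $(\sigma^*,\widetilde B(\sigma^*))$, and with the exact value $\widetilde B(\sigma^*)=0\le\epsmax$. A pair returned by \textsc{Schedule} can be accepted in the climb loop, which happens only if $B'\le\epsmax$. Otherwise it is a pair left after the descent loop. That loop exits either with $B\le\epsmax$ or at $k=0$, where \textsc{Eval} returns $\sigma^*$ with certificate $0$. Discarded candidates are never assigned to $\sigma_H$, so $B_t\le\epsmax$ holds at every hand.

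Finally, take the tower property. The per-hand bound holds conditionally on every history, so $\mathbb{E}[u_{\mathrm{hero}}(\sigma_H^{(t)},\sigma^{(t)}_{\mathrm{opp}})]-v^*_{\mathrm{hero}}\ge-\epsmax$ for each $t$. Averaging over $t=1,\dots,T$, with $T$ any fixed horizon, gives mean expected per-hand gain at least $-\epsmax$. The remark about realized winnings simply notes that these are expectations, and that realized returns deviate from them by a mean-zero martingale-difference sequence. Beyond the case check above, there is one more subtle point. The bound must hold for adaptive opponents, and that is handled by conditioning before hand $t$ and using the fact that the bound is uniform over all opponent strategies.
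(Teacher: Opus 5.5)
Your proof is correct and is the paper's argument: $B_t$ is the exploitability of the very strategy being played, so for every opponent strategy the zero-sum negation gives $u_{\mathrm{hero}}(\sigma_H^{(t)},\sigma^{(t)}_{\mathrm{opp}})\ge v^*_{\mathrm{hero}}-B_t$, and (b) bounds $B_t$ by $\epsmax$. Three of your steps only make explicit what the paper's one-line proof leaves implicit in hypothesis (b): the conditioning on the pre-hand history, the tower-property averaging, and the case check of the Nash initial, reset and fallback pairs, whose certificate is $0$ provided $\sigma^*$ is an exact equilibrium, as the idealization intends.
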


\begin{proof}
By definition,
$B_t=\max_{\sigma_{\mathrm{opp}}}
u_{\mathrm{opp}}(\sigma_H^{(t)},
\sigma_{\mathrm{opp}})-v^*_{\mathrm{opp}}$. For any
$\sigma^{(t)}_{\mathrm{opp}}$, zero-sum payoffs give
$u_{\mathrm{hero}}=-u_{\mathrm{opp}}\ge
-(v^*_{\mathrm{opp}}+B_t)=v^*_{\mathrm{hero}}-B_t$.
Condition (b) gives the horizon statement even when $B_t$ is random and
history-dependent.
\end{proof}

\paragraph{Finite-reference correction.}
The implementation uses a finite-iteration reference value
$\widetilde v_{\mathrm{opp}}$. Let
$|\widetilde v_{\mathrm{opp}}-v^*_{\mathrm{opp}}|\le\eta_v$ and define
$\widetilde B_t=\max_{\sigma_{\mathrm{opp}}}
u_{\mathrm{opp}}
(\sigma_H^{(t)},\sigma_{\mathrm{opp}})-\widetilde v_{\mathrm{opp}}$.
Then
$u_{\mathrm{hero}}\ge v^*_{\mathrm{hero}}-\widetilde B_t-\eta_v$.
A runtime threshold $\widetilde B_t\le\epsmax$ therefore certifies true-game
loss at most $\epsmax+\eta_v$. The tree-walk best response is exact for the
fixed played strategy, while $\eta_v$ records the separate reference error.
The every-hand best responder in the stress test realizes this certificate
boundary within solver tolerance.

\begin{assumption}[Estimator idealization]
\label{ass:estimator}
Proposition~\ref{prop:cs} considers per-(pool, action) observation processes
whose increments have conditional mean $f$ given the past, where $f$ is the
reach-weighted pool frequency under a fixed hero strategy. Hidden-card
soft-attribution weights and the within-pool reach mixture are treated as
fixed. Under the null, the implementation computes $f^*$ with the same
attribution and reach weighting. Under deviation, margin $\delta$ also absorbs
attribution error and reach-mixture drift induced by replanning.
Proposition~\ref{prop:cert} holds independently of this assumption.
\end{assumption}

\begin{proposition}[Detection validity]
\label{prop:cs}
Under Assumption~\ref{ass:estimator}, against any fixed opponent
$\sigma_{\mathrm{opp}}$, with probability at least $1-\alpha$ every
confidence sequence simultaneously covers its monitored pool frequency at
all times. On this event: (i) no cell whose frequency lies within $\delta$ of
the equilibrium reference is flagged, so an exact-Nash opponent keeps
\csrnr{} at $p=0$; and (ii) every flagged excess understates the monitored
pool-level deviation. Under within-pool homogeneity, the per-infoset model
inherits its direction and a lower bound on its magnitude.
\end{proposition}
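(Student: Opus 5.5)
The plan is to first establish the good event by a union bound, and then read off (i) and (ii) deterministically on that event.

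\emph{Good event.} Under Assumption~\ref{ass:estimator}, each (pool, action) cell $(P,a)$ has an observation process whose increments are bounded in $[0,1]$. Given the past, these increments have conditional mean equal to the monitored frequency $f_{P,a}$. Because the opponent is fixed, $f_{P,a}$ does not change over time, and the boundaries of \citet{howard2021time} apply directly. Applying the stitched sub-Gaussian or empirical-Bernstein boundary at level $\alpha_{\mathrm{cell}}=\alpha/m$ gives
\[
\Pr\bigl(\exists n:\ |\hat f_n-f_{P,a}|>w_n(\alpha_{\mathrm{cell}})\bigr)\le\alpha_{\mathrm{cell}}
\]
for each cell separately. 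The bound is time-uniform, so it holds at every checkpoint, even though checkpoints are chosen adaptively. A union bound over the $m$ cells then yields an event $\mathcal{G}$ with $\Pr(\mathcal{G})\ge 1-\alpha$. On $\mathcal{G}$, every cell satisfies $f_{P,a}\in[\hat f-w_n,\hat f+w_n]$ at all times.

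\emph{Part (i).} On $\mathcal{G}$, suppose $|f_{P,a}-f^*_{P,a}|\le\delta$. Then $f_{P,a}$ lies in both the confidence interval and $[f^*-\delta,f^*+\delta]$. The two intervals therefore intersect, so the cell is never flagged. An exact-Nash opponent has $f=f^*$ in every cell; this uses the assumption that $f^*$ is built with the same attribution and reach weighting. So $\mathcal{E}=\emptyset$ at every checkpoint. Algorithm~\ref{alg:csrnr} then resets $k\gets 0$ and $\sigma_H\gets\sigma^*$, so the agent stays at $p=0$ throughout.

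\emph{Part (ii).} Take the case $\hat f-w>f^*+\delta$. Coverage gives $f\ge\hat f-w$, so
\[
0<\delta<e=(\hat f-w)-f^*\le f-f^*.
\]
The opposite-sign case is symmetric. In both cases $\operatorname{sign}(e)=\operatorname{sign}(f-f^*)$ and $|e|\le|f-f^*|$, which is the claimed understatement.

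\emph{Per-infoset consequence.} Suppose within-pool homogeneity holds, meaning each member infoset deviates by the same additive amount $f-f^*$ from its local reference. Then the model's flagged shift $e$ has the correct sign and magnitude at most $|f-f^*|$ at each $I$. I would check that clipping to $[0,1]$ and renormalisation do not reverse this. Clipping only shrinks the shift toward the admissible range, which the true frequency also respects. Renormalising the unflagged actions changes only their mass, not the flagged action's direction, except when several actions are flagged and the final normalisation rescales them.

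I expect this last step to be the main obstacle. The first thing I would try is to restrict the claim to the flagged action's direction relative to $\sigma^*_I$ and to assume $\sum_{a\in F}\sigmahat_I(a)\le 1$, so that the final normalisation is the identity on flagged actions. I would state anything beyond that as an idealisation.
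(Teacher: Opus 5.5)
\paragraph{Core argument.} Your treatment of the good event and of (i) and (ii) is the paper's own argument. It uses per-cell time-uniform boundaries at level $\alpha/m$ and a union bound. It then observes that an interval containing $f$ can clear $f^*\pm\delta$ only if $f$ does, and that the edge nearest $f^*$ lies between $f^*$ and $f$. Two small points there:
\begin{itemize}
\item $f_{P,a}$ is constant because Assumption~\ref{ass:estimator} freezes the attribution weights and the within-pool reach mixture. The opponent being fixed is not enough, since hero's replanning would otherwise move the reach mixture.
\item The stitched boundary of \citet{howard2021time} written in the appendix bounds one-sided crossings. Your displayed two-sided inequality therefore needs level $\alpha_{\mathrm{cell}}/2$ per side. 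The paper's sketch glosses this in the same way.
\end{itemize}

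\paragraph{The per-infoset step.} The real problem is this step, which the paper's sketch does not argue at all. Under homogeneity, clipping is in fact inactive: $\sigma^*_I(a)+e_{P,a}$ lies between $\sigma^*_I(a)$ and $\sigma_{\mathrm{opp},I}(a)=\sigma^*_I(a)+(f_{P,a}-f^*_{P,a})$, and both lie in $[0,1]$. Your proposed repair assumes $S=\sum_{a\in F}\hat\sigma_I(a)\le 1$. That does not make the final normalisation the identity on flagged actions when $F$ contains every action at $I$. In that case no unflagged action absorbs $r$, so \textsc{Model} divides by $S<1$ and inflates every flagged shift.

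The magnitude clause genuinely fails in that case. Take a three-action pool with $f^*=(0.2,0.6,0.2)$, true $f=(0.4,0.2,0.4)$, $\delta=0.1$, $w=0.05$, and covered estimate $\hat f=(0.445,0.155,0.400)$. Let the member infoset be homogeneous with $\sigma^*_I=f^*$.
\begin{itemize}
\item All three cells flag, with $e=(0.195,-0.395,0.150)$, so $S=0.95$.
\item The normalised model puts $0.395/0.95\approx0.416$ on the first action.
\item That is a shift of about $0.216$, which exceeds the true deviation $0.2$.
\end{itemize}
So no argument can deliver the magnitude bound for the final normalised model in general.

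What you can prove is one of two weaker forms:
\begin{itemize}
\item the bound for the pre-normalisation shift $e_{P(I),a}$, which is the paper's \emph{additive key-level shift}; or
\item the bound for the final model under the added hypothesis that some action at $I$ is unflagged and $S\le 1$.
\end{itemize}
In the second case normalisation leaves the flagged entries untouched. Both direction and understatement then follow directly from (ii) and the inactivity of clipping.
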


\begin{proof}[Proof sketch]
Each cell uses a time-uniform boundary at level $\alpha/m$
\citep{howard2021time}. A union bound over the $m$ cells gives
simultaneous coverage $1-\alpha$. On this event, an interval containing the
monitored frequency can lie entirely beyond $f^*\pm\delta$ only when the
frequency does. The interval edge nearest $f^*$ lies between $f^*$ and that
frequency, giving the conservative excess.
\end{proof}

At individual decision points, the schedule is predictable because each
checkpoint uses earlier data, and hero's strategy changes which opponent
points are observed. Time-uniform boundaries permit such optional
observation. Replanning also changes the reach mixture within a pool, as
captured by Assumption~\ref{ass:estimator} and the margin. The confidence
sequences describe fixed opponents; adaptive-opponent deployment remains
covered by Proposition~\ref{prop:cert}.

\begin{remark}[Certificate gap for trajectory mixtures]
A trajectory-level mixture plays $\mathrm{BR}(\sigmahat)$ with probability
$p$ and the reference strategy otherwise. Its exploitability is at most
$p\,\mathrm{expl}(\mathrm{BR}(\sigmahat))$ by linearity of opponent value in
hero's mixture. The factor $\mathrm{expl}(\mathrm{BR}(\sigmahat))$ can remain
near the game's maximum because a best response to a nearly Nash model is an
extreme tie-breaking strategy. In our main comparison, Fixed-Mix reaches
exploitability $2.04$ at $p=0.5$. The behavioural restricted solve at the
same level and with the same $\sigmahat$ reaches $0.15$. \csrnr{} schedules
the latter measured quantity.
\end{remark}

\section{Experiments}
\label{sec:experiments}

Every reported gain is a full tree traversal of the profile in force at that
hand, so probe costs, cold-start losses, and scheduling mistakes are visible
without evaluation noise; sampling randomness enters only through the
observations the online learner sees. The main testbed is Leduc hold'em
\citep{southey2005bayes} with unit ante, a standard 288-information-set
benchmark whose small tree permits full-tree evaluation of every reported
strategy (units chips/hand; maximal exploitability ${\approx}6.1$, and a strong
deviator offers a best response from $0.7$ to $3.5$). Observations aggregate into
12 pools; game rules, the pool taxonomy, and the detector configuration are in
Appendix~\ref{app:repro}. Section~\ref{sec:generality} repeats the study on
Liar's Dice and 5-rank Leduc.

\paragraph{Opponents.}
Two 12-opponent suites. \emph{Concentrated}: equilibrium play except one
or two isolated, pure-action blunders (e.g.\ folds the nuts facing a
bet), the classical target of safe exploitation. \emph{Diffuse}: log-odds
perturbations of the CFR-derived equilibrium reference across all decision
points, six behavioural archetypes at two strengths, worth from $0.20$ to
$1.84$ to a best response yet with no single blunder to find. Stress tests
additionally use adaptive opponents (below).

\paragraph{Arms.}
(i)~\textsc{Nash}: the equilibrium, gain $0$ by construction;
(ii)~\textsc{Oracle}: static exact $\mathrm{BR}(\sigma_{\mathrm{opp}})$,
the exploitation ceiling;
(iii)~\textsc{Binary gate}: play Nash, probe a candidate leak at rate
$0.1$, and release a full best response to the conservative model once
realized probe winnings confirm a money edge;
(iv)~\textsc{Fixed-Mix}: play $\mathrm{BR}(\sigmahat)$ with probability
$p{=}0.5$ and Nash otherwise. This trajectory mixture differs from the
pinned game of \citet{johanson2007computing};
(v)~\textsc{Fixed-DBR}: behavioural restricted solve at constant
$p{=}0.5$;
(vi)~\csrnr{} with budget $\epsmax{=}0.15$. Arms (iv), (v), and (vi)
share the estimator, checkpoints, and solver, isolating the schedule.
We use $T{=}800$ hands and $10$ seeds per opponent. Comparisons report mean
steady-state gain, losing opponents, and the largest exploitability reached.

\subsection{The Offline Frontier}

\begin{figure}[t]
\centering
\includegraphics[width=0.62\textwidth]{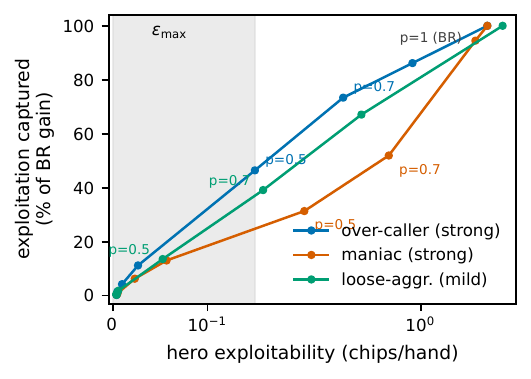}
\caption{Restricted-response frontier with an oracle model
($\sigmahat=\sigma_{\mathrm{opp}}$); $x$ symlog-scaled, shaded band:
$B \le \epsmax = 0.15$. Curves are strongly concave (the strong
over-caller: $46\%$ of the BR gain at $6\%$ of its exploitability) and
opponent-dependent (the mild deviator offers almost nothing inside the
band). Full BR against the \emph{mildest} opponent is the most
exploitable strategy shown ($3.01$).}
\label{fig:frontier}
\end{figure}

Figure~\ref{fig:frontier} maps what restricted responses buy when the
model is exact: exploitation concentrates at small $p$; every curve
stays under exploitability $0.06$ for $p \le 0.3$; and the frontier's
shape varies with the opponent, so the profitable operating point must
be found from data, which motivates the schedule.

\subsection{Main Comparison}

\begin{table}[t]
\centering
\small
\begin{tabular}{lrrrr}
\toprule
Arm & Gain & Expl. & Lose & Worst cert.\\
\midrule
\textsc{Nash} & $0$ & 0/24 & 0 & n/a\\
\textsc{Oracle} (ceiling) & $+1.169$ & 24/24 & 0 & $6.11$\\
\textsc{Binary gate} & $+0.034$ & 11/24 & \textbf{3} & uncert.\\
\textsc{Fixed-Mix} & $+0.227$ & 13/24 & 0 & \textbf{2.04}\\
\textsc{Fixed-DBR} & $+0.193$ & 13/24 & 0 & $0.143^{\dagger}$\\
\csrnr{} (ours) & $+0.209$ & 13/24 & 0 & $\mathbf{0.150} \le \epsmax$\\
\bottomrule
\end{tabular}
\caption{Main comparison, 24 opponents $\times$ 10 seeds, $T{=}800$,
$\epsmax{=}0.15$. Gain: mean exact steady-state gain (final fifth of
hands). Expl.: opponents exploited ($95\%$ CI above zero). Lose:
opponents with negative mean cumulative gain. Worst cert.: largest
exploitability of any strategy the arm ever played.
$\dagger$observed in this sweep but not enforced by the fixed pin.}
\label{tab:main}
\end{table}

Table~\ref{tab:main} gives the headline. \csrnr{} earns $6.2\times$ the
binary gate's steady-state gain (paired over $24$ opponents: $t{=}3.8$,
$p{<}0.002$; Wilcoxon $p{<}0.002$), has no negative mean cumulative
matchup at $T{=}800$, and its worst certificate over all $240$ episodes is
$0.1496$, so the budget binds without being crossed. The gate, despite money
verification, \emph{loses} against three opponents (worst $-0.068$) because it
never resolves: candidate leaks stay in probation forever and the
$\varepsilon$-probe hands spent there bleed money indefinitely (release
fraction $0$ on all three). Fixed-Mix matches
\csrnr{}'s gain but its worst played strategy has exploitability $2.04$,
$13.6\times$ the budget: identical estimator, no certificate, no safety.
Fixed-DBR stays low only because $p{=}0.5$ happens to be cheap against
these models, and its constant level costs it where the schedule climbs
higher ($+0.142$ vs.\ \csrnr{}'s $+0.158$ on the tag-flatter).
\begin{figure}[t]
\centering
\includegraphics[width=0.66\textwidth]{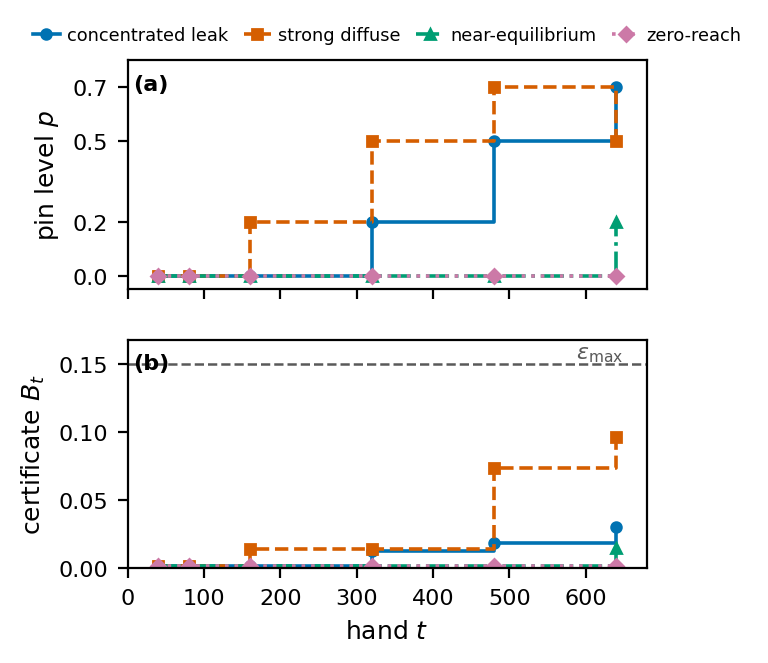}
\caption{Per-opponent schedule adaptation (median over $10$ seeds).
(a) Pin level $p$ against hand $t$: the schedule climbs on a strong
concentrated leak (to $0.7$) and a diffuse over-caller (to $0.7$, then a
checkpoint re-certification steps it back to $0.5$), reaches only $0.2$ on a
near-equilibrium deviator, and stays at $0$ where nothing is confirmable.
(b) The certificate $B_t$ of every deployed level stays under
$\epsmax{=}0.15$.}
\label{fig:schedule}
\end{figure}

Figure~\ref{fig:schedule} traces this per opponent: the level the schedule
settles on tracks how much the data confirm, and every certificate along the way
stays within budget.

\subsection{Certificate Stress Test}

\begin{figure}[t]
\centering
\includegraphics[width=0.62\textwidth]{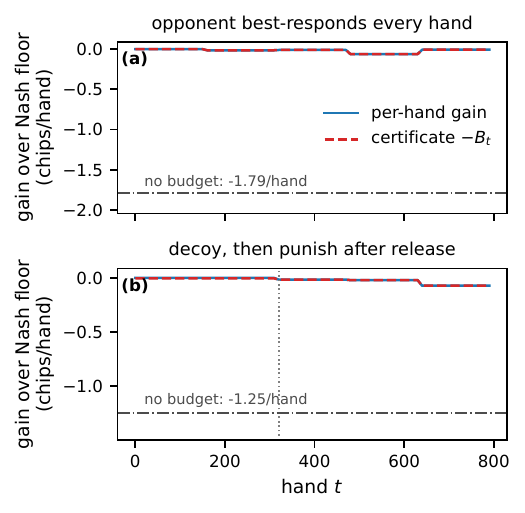}
\caption{Adversarial audit ($T{=}800$, $\epsmax{=}0.15$; hardest of 5
seeds). Solid: per-hand expected gain; dashed: $-\Bt$. (a) Against the
every-hand best responder the curves coincide. (b) Bait-and-punish
costs $-0.009$/hand. Dash-dotted: measured mean loss with the same
detector but no budget ($-1.79$ and $-1.25$ per hand).}
\label{fig:stress}
\end{figure}

\begin{figure}[t]
\centering
\includegraphics[width=0.60\textwidth]{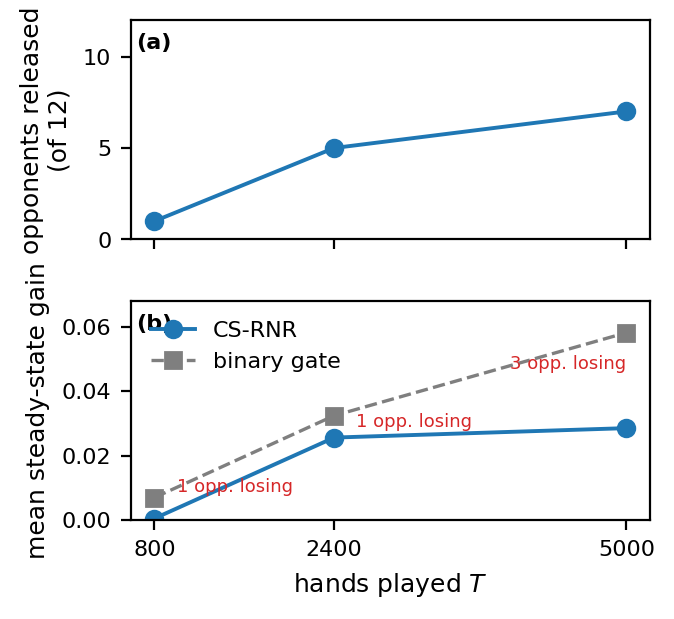}
\caption{Diffuse suite vs.\ horizon under empirical-Bernstein boundaries
($\delta{=}0.05$; scaled checkpoints, unchanged gate). (a) Opponents
released by \csrnr{}. (b) Mean steady-state gain. At $T{=}5000$, one
unbounded full-BR release raises the gate mean. \csrnr{} keeps every
certificate ${\le}0.148$; two marginal releases lose $-0.0022$ and $-0.0018$
per hand cumulatively, within budget.}
\label{fig:scaling}
\end{figure}

Proposition~\ref{prop:cert} applies to opponents outside the statistical model.
We evaluate three adaptive opponents with $5$ seeds each and audit
$\mathrm{gain}_t\ge-\Bt$ up to solver tolerance
(Appendix~\ref{app:repro}), alongside the same
system without the budget ($\epsmax=\infty$, releasing to $p=1$).
\emph{Every-hand BR} recomputes an exact best response whenever hero replans.
The audit finds no violations and expected loss matches the envelope: the worst
hand is $-0.0633$ at certificate $0.0633$ (Fig.~\ref{fig:stress}(a)), within the
$5\times10^{-3}$ solver tolerance. The unbudgeted system loses $-1.79$/hand,
$128\times$ larger, at an ignored certificate of $4.08$.
\emph{Decoy-then-punish} first presents a strong diffuse deviation and then
best-responds to hero after release. The detector raises $p$ before the
switch, and the certificate caps overall loss at $-0.009$/hand, while the
unbudgeted system loses $-1.25$/hand with individual hands reaching $-3.24$.
\emph{Nash-then-deviate} plays equilibrium for $400$ hands before a strong
deviation; detection is delayed, hero remains at Nash, and the audit is
unchanged. Across the three modes, all $12{,}000$ hands pass and the largest
certificate is $0.069$.

\subsection{The Price of Confirmation}

Diffuse deviations are statistically expensive to confirm: a pool-frequency
shift from $0.1$ to $0.28$ needs on the order of $10^3$ observations to
separate from equilibrium, while 800 hands supply only tens per pool. Under
the horizon sweep of Figure~\ref{fig:scaling}, \csrnr{} releases 1, 5, and 7
of 12 opponents as $T$ grows from 800 to 2400 to 5000 and keeps every deployed
certificate within budget, whereas the binary gate has from 1 to 3 losing
opponents across these horizons. Per-opponent gains and the boundary
comparison are in Appendix~\ref{app:repro}.

\subsection{Ablations: the Budget Dial vs.\ Fixed Pins}

\begin{table}[t]
\centering
\small
\begin{tabular}{lrr@{\hspace{10pt}}lrr}
\toprule
\csrnr{} & Gain & Cert. & \textsc{F-DBR} & Gain & Cert.\\
\midrule
$\epsmax{=}0.05$ & $+0.121$ & $0.048$ & $p{=}0.2$ & $+0.108$ & $0.026$\\
$\epsmax{=}0.15$ & $+0.160$ & $0.150$ & $p{=}0.3$ & $+0.122$ & $0.062$\\
$\epsmax{=}0.30$ & $+0.234$ & $0.254$ & $p{=}0.5$ & $+0.155$ & $0.143$\\
 & & & $p{=}0.7$ & $+0.235$ & $\mathbf{0.248}$\\
\bottomrule
\end{tabular}
\caption{Budget and pin sweeps on a 6-opponent subset spanning both
suites ($10$ seeds, $T{=}800$). Gain: mean steady-state gain. Cert.:
worst played exploitability. Every \csrnr{} row keeps its certificate
within its own budget; every fixed pin either under-exploits or, at
$p{=}0.7$, plays a strategy that would violate the $\epsmax{=}0.15$
budget by $65\%$ without a prospective enforcement mechanism.}
\label{tab:ablation}
\end{table}

Table~\ref{tab:ablation} varies the deployment budget and fixed pin level.
Increasing $\epsmax$ yields monotonically higher gain while each \csrnr{}
certificate remains within its threshold. A fixed $p$ can reach a useful
operating point, but its exploitability is measured after selection rather
than enforced against a target. Here $p=0.7$ exceeds the $0.15$ target by
$65\%$, whereas $p=0.5$ remains below it. A cross-suite holdout selects
$p=0.5$ in both directions; frozen-target gain and certificate are
$+0.370$ and $0.143$ on concentrated opponents, and $+0.017$ and $0.096$ on
diffuse opponents, where 2 of 12 are exploited
(Appendix~\ref{app:repro}).

\subsection{Cross-Game Validation in Solver-Tractable Games}
\label{sec:generality}

We repeat the six-arm comparison and adversarial audit on Liar's Dice and
5-rank Leduc. Liar's Dice uses OpenSpiel with one 3-sided die per player and
pools indexed by own die and current bid. The 5-rank Leduc game has 780
information sets and uses the same pool taxonomy as the main game. Each study
uses 12 opponents. The budgets are $\epsmax=0.05$ for Liar's Dice and $0.15$
for 5-rank Leduc; other scheduling parameters are unchanged.

\begin{table}[t]
\centering
\small
\begin{tabular}{lrrrr}
\toprule
Game & \csrnr{} & Gate & F-Mix cert. & Audit\\
\midrule
Leduc & $+0.209$ & $+0.034$ & $2.04$ ($14\times$) & $0/12$k\\
Liar's Dice & $+0.181$ & $+0.031$ & $0.22$ ($4.4\times$) & $0/12$k\\
5-rank Leduc & $+0.008$ & $+0.002$ & $1.52$ ($10\times$) & $0/12$k\\
\bottomrule
\end{tabular}
\caption{Cross-game summary at $T=800$. Gain columns use 10 seeds per
opponent; audits use three modes, 5 seeds, and 800 hands. Fixed-Mix reports
its largest certificate and budget multiple. The largest \csrnr{}
certificates are $0.150$, $0.003$, and $0.088$, within budgets $0.15$,
$0.05$, and $0.15$, respectively.}
\label{tab:games}
\end{table}

Table~\ref{tab:games} shows that certificate control transfers while gain
follows confirmability. On Liar's Dice, \csrnr{} earns $5.8\times$ the gate
($p=0.004$) and captures from $67\%$ to $100\%$ of oracle gain on confirmed
deviators; its step-like frontier stays inexpensive through approximately
$p=0.7$, whereas high-pin responses to misestimated models reach certificate
$0.44$. Revealed dice remove hidden-card attribution from
Assumption~\ref{ass:estimator}.

On 5-rank Leduc the whole suite is diffuse, so gain follows detection. Under the
unified detector the suite confirms 0, 1, and 4 of 12 opponents at $T=800$,
$2400$, and $5000$; captured safe gain rises with the horizon (mean cumulative
$0.000$, $0.002$, and $0.007$ chips per hand) while the worst deployed
certificate stays within budget ($0.0008$, $0.073$, and $0.138\le0.15$). In
matched stress tests, the unbudgeted control loses $-0.236$ and $-0.714$ per
hand at certificates $1.175$ and $2.064$; the corresponding certified losses
are $-0.0026$ and $-0.0066$ at certificates $0.010$ and $0.060$.

\paragraph{Pre-registered concentrated 5-rank.}
To ask whether the weak 5-rank signal comes from the larger game or from the
diffuse deviations, we pre-register twelve concentrated 5-rank opponents, each
exact Nash with isolated pure-action blunders in a facing-a-bet cell, screened
by a frozen Phase-A gate committed by SHA-256 before any six-arm run. Three
pass and nine are recorded as negatives. On the admitted three an exact best
response captures $1.30$, $1.30$, and $1.39$ chips per hand, yet \csrnr{}
captures $0$: the rarely reached board-pair cell never accumulates enough
observations to confirm at $T=800$, so hero holds Nash at a solver-floor
certificate ($8\times10^{-4}$). Fixed-DBR and Fixed-Mix also capture $0$ and
the binary gate releases once. Confirmation, not certification, is the
bottleneck here, and it recedes with the horizon.

\paragraph{Independently trained opponents.}
We also evaluate twelve Liar's Dice opponents that are not hand-designed:
external-sampling MCCFR checkpoints at four training budgets, three
independent runs each, with twelve distinct strategy hashes and a
pre-registered budget and seed schedule, so no opponent is selected by outcome.
They are near-equilibrium (oracle gain falls from $0.48$ to $0.03$ chips per
hand as the budget grows), so absolute gains are small, and the ordering is
consistent: \csrnr{} captures the most beyond-Nash value (mean final gain
$+0.0065$, positive on $11/12$ and never below the reference), ahead of
Fixed-Mix ($+0.0057$, $8/12$), Fixed-DBR ($+0.0055$, $9/12$), and the binary
gate ($+0.0014$, which never completes a release). Every deployed strategy
stays at the solver-floor certificate ($3\times10^{-4}\ll\epsmax$).

\section{Discussion and Conclusion}
\label{sec:discussion}

\paragraph{Safety is a property of the played strategy.}
The certificate $B_t$ measures the full-tree best-response value of the strategy
the agent actually plays, so model quality moves the captured gain and not the
deployment bound. This is what separates \csrnr{} from methods that mix in a
model-based response by trajectory weight. Fixed-Mix shares the estimator and
matches \csrnr{}'s gain on the main suite, yet a mixture that plays
$\mathrm{BR}(\sigmahat)$ part of the time inherits that response's extreme
exploitability: $2.04$ on the main suite, and $14\times$, $4.4\times$, and
$10\times$ its budget in Leduc, Liar's Dice, and 5-rank Leduc, while every
\csrnr{} deployment stays inside budgets $0.15$, $0.05$, and $0.15$.

\paragraph{Certification is cheap enough to run online.}
A candidate level costs one restricted solve and one exact best-response pass.
Single-threaded on a Xeon 6348, the certificate pass takes $3.07$\,ms in Leduc,
$12.27$\,ms in 5-rank Leduc, and $22.25$\,ms in Liar's Dice against solves of
$6.62$\,s, $34.81$\,s, and $3.87$\,s, two to three orders of magnitude less than
the response it guards. The schedule can therefore re-certify every candidate at
every checkpoint.

\paragraph{Conclusion.}
\csrnr{} is the first method to make reference-relative safety a computed
property of the deployed strategy rather than an assumption about the model
that proposed it, set by a single budget $\epsmax$ spanning equilibrium play
and an unrestricted schedule. All 36{,}000 audited hands stayed within budget with no
violation, and where the data confirmed deviations \csrnr{} earned $6.2\times$
the binary gate's gain in Leduc and $5.8\times$ in Liar's Dice. Safe deployment
is governed by the strategy that is played rather than confidence in the model
that proposed it.

\bibliographystyle{plainnat}
\bibliography{references}

\begin{thebibliography}{54}
\providecommand{\natexlab}[1]{#1}
\providecommand{\url}[1]{\texttt{#1}}
\expandafter\ifx\csname urlstyle\endcsname\relax
  \providecommand{\doi}[1]{doi: #1}\else
  \providecommand{\doi}{doi: \begingroup \urlstyle{rm}\Url}\fi

\bibitem[Bernasconi-de Luca et~al.(2021)Bernasconi-de Luca, Cacciamani,
  Fioravanti, Gatti, Marchesi, and Trov{\`o}]{bernasconi2021exploiting}
Martino Bernasconi-de Luca, Federico Cacciamani, Simone Fioravanti, Nicola
  Gatti, Alberto Marchesi, and Francesco Trov{\`o}.
\newblock Exploiting opponents under utility constraints in sequential games.
\newblock \emph{Advances in Neural Information Processing Systems},
  34:\penalty0 13177--13188, 2021.

\bibitem[Bowling et~al.(2015)Bowling, Burch, Johanson, and
  Tammelin]{bowling2015heads}
Michael Bowling, Neil Burch, Michael Johanson, and Oskari Tammelin.
\newblock Heads-up limit hold'em poker is solved.
\newblock \emph{Science}, 347\penalty0 (6218):\penalty0 145--149, 2015.

\bibitem[Brown and Sandholm(2015)]{brown2015regret}
Noam Brown and Tuomas Sandholm.
\newblock Regret-based pruning in extensive-form games.
\newblock \emph{Advances in neural information processing systems}, 28, 2015.

\bibitem[Brown and Sandholm(2017)]{brown2017safe}
Noam Brown and Tuomas Sandholm.
\newblock Safe and nested subgame solving for imperfect-information games.
\newblock \emph{Advances in neural information processing systems}, 30, 2017.

\bibitem[Brown and Sandholm(2018)]{brown2018superhuman}
Noam Brown and Tuomas Sandholm.
\newblock Superhuman ai for heads-up no-limit poker: Libratus beats top
  professionals.
\newblock \emph{Science}, 359\penalty0 (6374):\penalty0 418--424, 2018.

\bibitem[Brown and Sandholm(2019{\natexlab{a}})]{brown2019solving}
Noam Brown and Tuomas Sandholm.
\newblock Solving imperfect-information games via discounted regret
  minimization.
\newblock In \emph{Proceedings of the AAAI Conference on Artificial
  Intelligence}, volume~33, pages 1829--1836, 2019{\natexlab{a}}.

\bibitem[Brown and Sandholm(2019{\natexlab{b}})]{brown2019superhuman}
Noam Brown and Tuomas Sandholm.
\newblock Superhuman ai for multiplayer poker.
\newblock \emph{Science}, 365\penalty0 (6456):\penalty0 885--890,
  2019{\natexlab{b}}.

\bibitem[Brown et~al.(2017)Brown, Kroer, and Sandholm]{brown2017dynamic}
Noam Brown, Christian Kroer, and Tuomas Sandholm.
\newblock Dynamic thresholding and pruning for regret minimization.
\newblock In \emph{Proceedings of the AAAI conference on artificial
  intelligence}, volume~31, 2017.

\bibitem[Brown et~al.(2018)Brown, Sandholm, and Amos]{brown2018depth}
Noam Brown, Tuomas Sandholm, and Brandon Amos.
\newblock Depth-limited solving for imperfect-information games.
\newblock \emph{Advances in neural information processing systems}, 31, 2018.

\bibitem[Brown et~al.(2019)Brown, Lerer, Gross, and Sandholm]{brown2019deep}
Noam Brown, Adam Lerer, Sam Gross, and Tuomas Sandholm.
\newblock Deep counterfactual regret minimization.
\newblock In \emph{International conference on machine learning}, pages
  793--802. PMLR, 2019.

\bibitem[Brown et~al.(2020)Brown, Bakhtin, Lerer, and Gong]{brown2020combining}
Noam Brown, Anton Bakhtin, Adam Lerer, and Qucheng Gong.
\newblock Combining deep reinforcement learning and search for
  imperfect-information games.
\newblock \emph{Advances in neural information processing systems},
  33:\penalty0 17057--17069, 2020.

\bibitem[Caen et~al.(2026)Caen, Winands, and Soemers]{caen2026stratformer}
Andy Caen, Mark~HM Winands, and Dennis~JNJ Soemers.
\newblock Stratformer: Adaptive opponent modeling and exploitation in
  imperfect-information games.
\newblock \emph{arXiv preprint arXiv:2604.25796}, 2026.

\bibitem[Capitaine et~al.(2026)Capitaine, Scheid, Boursier, Durmus, and
  Jordan]{capitaine2026test}
Aymeric Capitaine, Antoine Scheid, Etienne Boursier, Alain Durmus, and
  Michael~I Jordan.
\newblock Test-then-punish: A statistical approach to repeated games.
\newblock \emph{arXiv preprint arXiv:2603.05619}, 2026.

\bibitem[Farina et~al.(2021)Farina, Kroer, and Sandholm]{farina2021faster}
Gabriele Farina, Christian Kroer, and Tuomas Sandholm.
\newblock Faster game solving via predictive blackwell approachability:
  Connecting regret matching and mirror descent.
\newblock In \emph{Proceedings of the AAAI Conference on Artificial
  Intelligence}, volume~35, pages 5363--5371, 2021.

\bibitem[Farina et~al.(2023)Farina, Grand-Cl{\'e}ment, Kroer, Lee, and
  Luo]{farina2023regret}
Gabriele Farina, Julien Grand-Cl{\'e}ment, Christian Kroer, Chung-Wei Lee, and
  Haipeng Luo.
\newblock Regret matching+:(in) stability and fast convergence in games.
\newblock \emph{Advances in Neural Information Processing Systems},
  36:\penalty0 61546--61572, 2023.

\bibitem[Fu et~al.(2022)Fu, Tian, Yu, Liu, Wu, Xiong, Wen, Li, Xing, Fu,
  et~al.]{fu2022greedy}
Haobo Fu, Ye~Tian, Hongxiang Yu, Weiming Liu, Shuang Wu, Jiechao Xiong, Ying
  Wen, Kai Li, Junliang Xing, Qiang Fu, et~al.
\newblock Greedy when sure and conservative when uncertain about the opponents.
\newblock In \emph{International Conference on Machine Learning}, pages
  6829--6848. PMLR, 2022.

\bibitem[Ganzfried(2025{\natexlab{a}})]{DBLP:journals/corr/abs-2508-17671}
Sam Ganzfried.
\newblock Consistent opponent modeling of static opponents in
  imperfect-information games.
\newblock \emph{CoRR}, abs/2508.17671, 2025{\natexlab{a}}.

\bibitem[Ganzfried(2025{\natexlab{b}})]{ganzfried2025nonparametric}
Sam Ganzfried.
\newblock Nonparametric strategy test.
\newblock In \emph{The International FLAIRS Conference Proceedings}, volume~38,
  2025{\natexlab{b}}.

\bibitem[Ganzfried and Sandholm(2015)]{ganzfried2015safe}
Sam Ganzfried and Tuomas Sandholm.
\newblock Safe opponent exploitation.
\newblock \emph{ACM Transactions on Economics and Computation (TEAC)},
  3\penalty0 (2):\penalty0 1--28, 2015.

\bibitem[Gauthier et~al.(2026)Gauthier, Bach, and Jordan]{gauthier2026betting}
Etienne Gauthier, Francis Bach, and Michael~I Jordan.
\newblock Betting on equilibrium: Monitoring strategic behavior in multi-agent
  systems.
\newblock \emph{arXiv preprint arXiv:2601.05427}, 2026.

\bibitem[Ge et~al.(2022)Ge, Yang, Tian, Chen, and Gao]{ge2022modeling}
Zhenxing Ge, Shangdong Yang, Pinzhuo Tian, Zixuan Chen, and Yang Gao.
\newblock Modeling rationality: Toward better performance against unknown
  agents in sequential games.
\newblock \emph{IEEE Transactions on Cybernetics}, 54\penalty0 (5):\penalty0
  2966--2977, 2022.

\bibitem[Ge et~al.(2023)Ge, Xu, Ding, Li, and Gao]{ge2023efficient}
Zhenxing Ge, Zheng Xu, Tianyu Ding, Wenbin Li, and Yang Gao.
\newblock Efficient subgame refinement for extensive-form games.
\newblock \emph{Advances in Neural Information Processing Systems},
  36:\penalty0 8280--8291, 2023.

\bibitem[Ge et~al.(2024)Ge, Xu, Ding, Meng, An, Li, and Gao]{ge2024safe}
Zhenxing Ge, Zheng Xu, Tianyu Ding, Linjian Meng, Bo~An, Wenbin Li, and Yang
  Gao.
\newblock Safe and robust subgame exploitation in imperfect information games.
\newblock In \emph{Forty-first International Conference on Machine Learning},
  2024.

\bibitem[Gr{\"u}nwald(2024)]{grunwald2024proposer}
Peter Gr{\"u}nwald.
\newblock Proposer of the vote of thanks to waudy-smith and ramdas and
  contribution to the discussion of `estimating means of bounded random
  variables by betting'.
\newblock \emph{Journal of the Royal Statistical Society Series B: Statistical
  Methodology}, 86\penalty0 (1):\penalty0 28--30, 2024.

\bibitem[Hoehn et~al.(2005)Hoehn, Southey, Holte, and
  Bulitko]{hoehn2005effective}
Bret Hoehn, Finnegan Southey, Robert~C Holte, and Valeriy Bulitko.
\newblock Effective short-term opponent exploitation in simplified poker.
\newblock In \emph{AAAI}, volume~5, pages 783--788, 2005.

\bibitem[Howard et~al.(2021)Howard, Ramdas, McAuliffe, and
  Sekhon]{howard2021time}
Steven~R Howard, Aaditya Ramdas, Jon McAuliffe, and Jasjeet Sekhon.
\newblock Time-uniform, nonparametric, nonasymptotic confidence sequences.
\newblock \emph{The Annals of Statistics}, 49\penalty0 (2):\penalty0
  1055--1080, 2021.

\bibitem[Jing et~al.(2023)Jing, Li, Liu, Zang, Fu, Fu, Xing, and
  Cheng]{jing2023towards}
Yuheng Jing, Kai Li, Bingyun Liu, Yifan Zang, Haobo Fu, Qiang Fu, Junliang
  Xing, and Jian Cheng.
\newblock Towards offline opponent modeling with in-context learning.
\newblock In \emph{The Twelfth International Conference on Learning
  Representations}, 2023.

\bibitem[Jing et~al.(2024)Jing, Liu, Li, Zang, Fu, Fu, Xing, and
  Cheng]{jing2024opponent}
Yuheng Jing, Bingyun Liu, Kai Li, Yifan Zang, Haobo Fu, Qiang Fu, Junliang
  Xing, and Jian Cheng.
\newblock Opponent modeling with in-context search.
\newblock \emph{Advances in Neural Information Processing Systems},
  37:\penalty0 61549--61591, 2024.

\bibitem[Jing et~al.(2025)Jing, Li, Liu, Fu, Fu, Xing, and Cheng]{jing2025open}
Yuheng Jing, Kai Li, Bingyun Liu, Haobo Fu, Qiang Fu, Junliang Xing, and Jian
  Cheng.
\newblock An open-ended learning framework for opponent modeling.
\newblock In \emph{Proceedings of the AAAI Conference on Artificial
  Intelligence}, volume~39, pages 23222--23230, 2025.

\bibitem[Johanson and Bowling(2009)]{johanson2009data}
Michael Johanson and Michael Bowling.
\newblock Data biased robust counter strategies.
\newblock In \emph{Artificial Intelligence and Statistics}, pages 264--271.
  PMLR, 2009.

\bibitem[Johanson et~al.(2007)Johanson, Zinkevich, and
  Bowling]{johanson2007computing}
Michael Johanson, Martin Zinkevich, and Michael~H. Bowling.
\newblock Computing robust counter-strategies.
\newblock In \emph{{NIPS}}, pages 721--728. Curran Associates, Inc., 2007.

\bibitem[Li and Huang(2025)]{li2025efficient}
Boning Li and Longbo Huang.
\newblock Efficient online pruning and abstraction for imperfect information
  extensive-form games.
\newblock In \emph{The Thirteenth International Conference on Learning
  Representations}, 2025.

\bibitem[Li and Huang(2026{\natexlab{a}})]{li2026effective}
Boning Li and Longbo Huang.
\newblock Effective, efficient, and general information abstraction for
  imperfect-information extensive-form games.
\newblock \emph{arXiv preprint arXiv:2605.10900}, 2026{\natexlab{a}}.

\bibitem[Li and Huang(2026{\natexlab{b}})]{li2026real}
Boning Li and Longbo Huang.
\newblock Real-time parallel counterfactual regret minimization.
\newblock \emph{arXiv preprint arXiv:2605.19928}, 2026{\natexlab{b}}.

\bibitem[Li et~al.(2024)Li, Fang, and Huang]{li2024rl}
Boning Li, Zhixuan Fang, and Longbo Huang.
\newblock Rl-cfr: improving action abstraction for imperfect information
  extensive-form games with reinforcement learning.
\newblock In \emph{Proceedings of the 41st International Conference on Machine
  Learning}, pages 27752--27770, 2024.

\bibitem[Li et~al.(2026{\natexlab{a}})Li, Chen, and
  Huang]{li2026correlatedchancesamplingmonte}
Boning Li, Yu~Chen, and Longbo Huang.
\newblock Correlated chance sampling for monte carlo counterfactual regret
  minimization, 2026{\natexlab{a}}.
\newblock URL \url{https://arxiv.org/abs/2607.27035}.

\bibitem[Li et~al.(2026{\natexlab{b}})Li, Wang, and Huang]{li2026pokerskill}
Boning Li, Baoxiang Wang, and Longbo Huang.
\newblock Pokerskill: Llms can play expert-level poker without training or
  solvers.
\newblock \emph{arXiv preprint arXiv:2605.30094}, 2026{\natexlab{b}}.

\bibitem[Liu et~al.(2022)Liu, Wu, Liu, Jing, Yang, Tang, and
  Zhang]{liu2022safe}
Mingyang Liu, Chengjie Wu, Qihan Liu, Yansen Jing, Jun Yang, Pingzhong Tang,
  and Chongjie Zhang.
\newblock Safe opponent-exploitation subgame refinement.
\newblock \emph{Advances in Neural Information Processing Systems},
  35:\penalty0 27610--27622, 2022.

\bibitem[Liu et~al.(2023)Liu, Fu, Fu, and Wei]{liu2023opponent}
Weiming Liu, Haobo Fu, Qiang Fu, and Yang Wei.
\newblock Opponent-limited online search for imperfect information games.
\newblock In \emph{International Conference on Machine Learning}, pages
  21567--21585. PMLR, 2023.

\bibitem[McCracken and Bowling(2004)]{mccracken2004safe}
Peter McCracken and Michael Bowling.
\newblock Safe strategies for agent modelling in games.
\newblock In \emph{AAAI Technical Report (2)}, pages 103--110, 2004.

\bibitem[Meng et~al.(2025)Meng, Yang, Zhang, Ge, Yang, Ding, Li, An, and
  Gao]{meng2025efficient}
Linjian Meng, Tianpei Yang, Youzhi Zhang, Zhenxing Ge, Shangdong Yang, Tianyu
  Ding, Wenbin Li, Bo~An, and Yang Gao.
\newblock Efficient last-iterate convergence in solving extensive-form games.
\newblock In \emph{The Thirty-ninth Annual Conference on Neural Information
  Processing Systems}, 2025.
\newblock URL \url{https://openreview.net/forum?id=JzWtqd9CGJ}.

\bibitem[Meng et~al.(2026)Meng, Yang, Zhang, Ge, and Gao]{meng2026faster}
Linjian Meng, Tianpei Yang, Youzhi Zhang, Zhenxing Ge, and Yang Gao.
\newblock Faster game solving via asymmetry of step sizes.
\newblock In \emph{Proceedings of the AAAI Conference on Artificial
  Intelligence}, volume~40, pages 17161--17169, 2026.

\bibitem[Milec et~al.(2025)Milec, Kova{\v{r}}{\'\i}k, and
  Lis{\`y}]{milec2025adapting}
David Milec, Vojt{\v{e}}ch Kova{\v{r}}{\'\i}k, and Viliam Lis{\`y}.
\newblock Adapting beyond the depth limit: Counter strategies in large
  imperfect information games.
\newblock In \emph{Proceedings of the 24th International Conference on
  Autonomous Agents and Multiagent Systems}, pages 2675--2677, 2025.

\bibitem[Morav{\v{c}}{\'\i}k et~al.(2017)Morav{\v{c}}{\'\i}k, Schmid, Burch,
  Lis{\`y}, Morrill, Bard, Davis, Waugh, Johanson, and
  Bowling]{moravvcik2017deepstack}
Matej Morav{\v{c}}{\'\i}k, Martin Schmid, Neil Burch, Viliam Lis{\`y}, Dustin
  Morrill, Nolan Bard, Trevor Davis, Kevin Waugh, Michael Johanson, and Michael
  Bowling.
\newblock Deepstack: Expert-level artificial intelligence in heads-up no-limit
  poker.
\newblock \emph{Science}, 356\penalty0 (6337):\penalty0 508--513, 2017.

\bibitem[Murgoci et~al.(2026)Murgoci, Spaan, and
  Oren]{murgoci2026alphaexploitem}
Vlad Murgoci, Matthijs Spaan, and Yaniv Oren.
\newblock Alphaexploitem: Going beyond the nash equilibrium in poker by
  learning to exploit suboptimal play.
\newblock \emph{arXiv preprint arXiv:2605.09150}, 2026.

\bibitem[Nashed and Zilberstein(2022)]{nashed2022survey}
Samer Nashed and Shlomo Zilberstein.
\newblock A survey of opponent modeling in adversarial domains.
\newblock \emph{Journal of Artificial Intelligence Research}, 73:\penalty0
  277--327, 2022.

\bibitem[Ponsen et~al.(2010)Ponsen, Lanctot, and De~Jong]{ponsen2010mcrnr}
Marc Ponsen, Marc Lanctot, and Steven De~Jong.
\newblock Mcrnr: fast computing of restricted nash responses by means of
  sampling.
\newblock In \emph{Proceedings of the 3rd AAAI Conference on Interactive
  Decision Theory and Game Theory}, pages 43--49, 2010.

\bibitem[Southey et~al.(2005)Southey, Bowling, Larson, Piccione, Burch,
  Billings, and Rayner]{southey2005bayes}
Finnegan Southey, Michael Bowling, Bryce Larson, Carmelo Piccione, Neil Burch,
  Darse Billings, and Chris Rayner.
\newblock Bayes' bluff: opponent modelling in poker.
\newblock In \emph{Proceedings of the Twenty-First Conference on Uncertainty in
  Artificial Intelligence}, pages 550--558, 2005.

\bibitem[Sychrovsk{\`y} et~al.(2024)Sychrovsk{\`y}, {\v{S}}ustr, Davoodi,
  Bowling, Lanctot, and Schmid]{sychrovsky2024learning}
David Sychrovsk{\`y}, Michal {\v{S}}ustr, Elnaz Davoodi, Michael Bowling, Marc
  Lanctot, and Martin Schmid.
\newblock Learning not to regret.
\newblock In \emph{Proceedings of the AAAI Conference on Artificial
  Intelligence}, volume~38, pages 15202--15210, 2024.

\bibitem[Tammelin(2014)]{tammelin2014solving}
Oskari Tammelin.
\newblock Solving large imperfect information games using cfr+.
\newblock \emph{arXiv preprint arXiv:1407.5042}, 2014.

\bibitem[Wang et~al.(2025)Wang, Wang, and Song]{wang2025horse}
Shijia Wang, Jiao Wang, and Bangyan Song.
\newblock Horse-cfr: Hierarchical opponent reasoning for safe exploitation
  counterfactual regret minimization.
\newblock \emph{Expert Systems with Applications}, 263:\penalty0 125697, 2025.

\bibitem[Xu et~al.(2024)Xu, Li, Fu, Fu, Xing, and Cheng]{xu2024dynamic}
Hang Xu, Kai Li, Haobo Fu, Qiang Fu, Junliang Xing, and Jian Cheng.
\newblock Dynamic discounted counterfactual regret minimization.
\newblock In \emph{The Twelfth International Conference on Learning
  Representations}, 2024.

\bibitem[Xu et~al.(2026)Xu, Li, Fu, Fu, Xing, and Cheng]{xu2026deep}
Hang Xu, Kai Li, Haobo Fu, Qiang Fu, Junliang Xing, and Jian Cheng.
\newblock Deep (predictive) discounted counterfactual regret minimization.
\newblock In \emph{Proceedings of the AAAI Conference on Artificial
  Intelligence}, volume~40, pages 17284--17292, 2026.

\bibitem[Zinkevich et~al.(2007)Zinkevich, Johanson, Bowling, and
  Piccione]{zinkevich2007regret}
Martin Zinkevich, Michael Johanson, Michael Bowling, and Carmelo Piccione.
\newblock Regret minimization in games with incomplete information.
\newblock \emph{Advances in neural information processing systems}, 20, 2007.

\end{thebibliography}

\appendix

\section{Reproducibility Details}
\label{app:repro}

\paragraph{Game.}
Leduc hold'em \citep{southey2005bayes}: 6 cards (3 ranks, 2 suits), unit
ante, two betting rounds with raise sizes 2 and 4, at most two raises
per round; 288 information sets. The equilibrium-reference strategy is
the average CFR$^+$ strategy after 4000 iterations. Expected values and
best-response values of fixed behavioural strategies are then evaluated
by full tree walks, without Monte Carlo evaluation noise; the
finite-iteration reference is not claimed to be an exact Nash
equilibrium.

\paragraph{Pools ($m=30$ scanned cells).}
Opponent observations aggregate into 12 pools: preflop by private card
$\times$ \{not facing / facing a raise\} (6 pools), postflop by hand
strength category \{pair, high, low\} $\times$ \{not facing / facing\}
(6 pools). Not-facing pools have canonical actions (check, raise);
facing pools (fold, call, raise), giving $6{\times}2 + 6{\times}3 = 30$
cells. Hidden cards are soft-attributed over board-consistent holdings;
the equilibrium reference $f^*$ per pool is computed by passing a fixed
20{,}000-hand Nash-vs-Nash sample (seed 0) through the same estimator.

\paragraph{Detection.}
Episode-wide error budget $\alpha=0.05$, Bonferroni-split over the
$m=30$ cells. Main comparison: stitched sub-Gaussian boundary
(variance proxy $1/4$), margin $\delta=0.10$ (matching the binary
gate's margin). Long-horizon runs: empirical-Bernstein boundary with
$\delta=0.05$. Both use the stitching constants of
\citet{howard2021time}:
$u(v) = 1.7\sqrt{v(\ln\ln 2v + 0.72\ln(5.2/\alpha_{\mathrm{cell}}))}$,
plus $3.4(\ln\ln 2v + 0.72 \ln (5.2/\alpha_{\mathrm{cell}}))$ for the
empirical-Bernstein form.

\paragraph{Schedule.}
Pin grid $p \in \{0,\allowbreak 0.1,\allowbreak 0.2,\allowbreak
0.3,\allowbreak 0.5,\allowbreak 0.7,\allowbreak 0.9\}$;
checkpoints at hands $\{40,\allowbreak 80,\allowbreak 160,\allowbreak
320,\allowbreak 480,\allowbreak 640\}$ for $T{=}800$
and $\{160,\allowbreak 320,\allowbreak 640,\allowbreak 1280,\allowbreak
2560,\allowbreak 4000\}$ for $T{=}5000$; at most two grid
steps upward per checkpoint; restricted solves run CFR$^+$ for 400
iterations; each candidate level is certified by one exact
best-response pass. Budget $\epsmax = 0.15$ unless stated.

\paragraph{Binary gate (baseline).}
GATE-style money-verified release: frequency test (Wilson intervals with
anytime peeling, $\alpha=0.05$, margin $0.10$) proposes candidate pools;
a candidate stable for 6 consecutive refreshes is probed with
probability $0.1$ per hand by playing
$\mathrm{BR}(\sigmahat_{\mathrm{cand}})$; release requires the realized
probe-payoff lower confidence bound to clear the Nash floor. Identical
configuration at every horizon.

\paragraph{Evaluation and audit.}
All gains are exact expected values of the played profile (no
Monte-Carlo evaluation noise); observations for learning are sampled
hands. Seeds: 10 per opponent (main, $T{=}2400$, ablations), 6
($T{=}5000$), 5 (stress modes). The per-hand certificate audit checks
$\mathrm{gain}_t \ge -\Bt - 5\times10^{-3}$, the tolerance covering
CFR$^+$ solve residue at 400 iterations; ``0 violations'' is reported
under this tolerance; the stress-test boundary of Figure~\ref{fig:stress}
matches to ${\sim}10^{-4}$. Everything runs on CPU (Intel Xeon Gold 6348 at
2.60\,GHz, 251\,GB RAM, Ubuntu 20.04) under Python 3.9.16 with NumPy 1.26.4 as
the only dependency; the main comparison costs about 8 core-hours, close to one
hour of wall time with the 24 opponents in parallel.

\paragraph{Price of confirmation (horizon detail).}
Under the empirical-Bernstein detector of the horizon sweep shown in the main
paper, \csrnr{} releases 1, 5, and 7
of 12 diffuse opponents at $T\in\{800,2400,5000\}$; the main comparison table
instead uses a sub-Gaussian boundary and releases 2 at $T{=}800$
(empirical-Bernstein is looser at small samples and tighter later). The 800
hands over 12 pools give from 40 to 155 observations per pool. At $T{=}2400$
the strong over-caller yields $+0.163$/hand for \csrnr{} at certificate
$0.148$ while the gate loses $-0.061$/hand; at $T{=}5000$ the gate's higher
mean comes from one unbounded release gaining $+0.65$/hand versus the budgeted
$+0.20$/hand. The gate has from 1 to 3 losing opponents over these horizons;
\csrnr{} has none at $T{=}800$ or $2400$, and its two marginal releases at
$T{=}5000$ have cumulative means $-0.0022$ and $-0.0018$/hand while every
deployed strategy remains within budget.

\paragraph{Liar's Dice (generality study).}
OpenSpiel \texttt{liars\_dice}, one die per player, \texttt{dice\_sides}
$=3$: 192 information sets, 7 actions (6 bids in count-major order plus
``Liar''), payoffs $\pm 1$. The highest face is wild (counts toward
every face); verified against all 54 terminal bid and challenge pairs. Pools:
own die $\times$ current highest bid (18 pools); cells are the concrete
continuation actions (bids above the current one, plus Liar; 63 cells,
Bonferroni across all). Every hand ends with a challenge and both dice
revealed, so opponent counts are exact identity counts (no soft
attribution), and the Nash reference is the exact reach-weighted tree
walk. Restricted solves run the same CFR$^+$ (400 iterations) on a
one-shot materialization of the pyspiel tree; certificates use
OpenSpiel's exact best response. Budget $\epsmax{=}0.05$; grid,
checkpoints, $\alpha$, and margin as in the main configuration.
Opponent suite (12; exact best-response value above the Nash floor in
parentheses): class-biased log-odds liar-caller and bluffer at two
strengths (from $0.03$ to $0.13$); softened-challenge ``trusting'' at
$q{=}0.3/0.6$ ($0.20/0.40$); truth-bid-biased ``honest'' at $q{=}0.4/0.7$
($0.10/0.19$); one dominated-action leak blended at $w{=}0.35/0.7/1.0$
and a two-cell variant (from $0.16$ to $0.67$). The binary-gate arm is the
same money-verified gate at per-infoset granularity (its game-agnostic
form; payoff range $R{=}2.0$), probing and releasing through the
identical online loop. The decoy for the adaptive stress modes is
trusting-$0.6$ (the diffuse log-odds deviators sit below the $T{=}800$
detection radar and would never bait a release).

\paragraph{5-rank Leduc (generality study).}
$n_{\mathrm{ranks}}{=}5$, two suits: $780$ information sets; identical
pool taxonomy ($16$ pools, $40$ cells via the same private-card
$\times$ facing construction), identical budget, grid, and detector
configuration as 3-rank. The 12-opponent suite is the log-odds
construction instantiated on the 5-rank equilibrium. The matched
unbudgeted control uses the same detector with $\epsmax{=}\infty$ and
$p\in\{0,1\}$; zero audit violations for this arm are vacuous because
its certificates reach $1.175$ and $2.064$, rather than satisfying the
nominal $0.15$ budget. One latent-deck
pitfall matters for reproduction: the Leduc trajectory sampler and the
gate's reach-weighted reference construct a default 3-rank deck unless
the game config is passed explicitly; all 5-rank runs here pass it (the
drivers and a config-aware reference), which we verified by deck
enumeration (unpatched, only ranks 0, 1, and 2 are ever dealt and the
reference diverges by up to $0.50$ in rank-4 pools).

\paragraph{Fixed-pin holdout and runtime.}
The fixed-DBR sweep uses
$p\in\{0,\allowbreak 0.05,\allowbreak 0.1,\allowbreak 0.2,\allowbreak
0.3,\allowbreak 0.5,\allowbreak 0.7,\allowbreak 0.9,\allowbreak 1\}$, 10 seeds,
and opponent-equal
weighting. We select the highest-gain source-suite pin among those whose
source worst certificate is at most $0.15$, then evaluate that frozen pin
on the other suite. Both directions select $p{=}0.5$; the diffuse and
concentrated suite gain/certificate pairs are respectively
$0.0169/0.0955$ and $0.3700/0.1434$. This is a grid- and suite-level
holdout, not a population guarantee. The solve and certificate timings quoted
in Section~\ref{sec:discussion} are medians over five sequential warm-context
repeats after
one 10-iteration warmup, measured single-threaded on a Xeon 6348; we do not
compare speed across the three game-specific implementations.

\paragraph{Independently-trained opponents.}
Beyond the scripted suites, we evaluate on $12$ Liar's Dice opponents
that are \emph{not} hand-designed: fresh external-sampling MCCFR
checkpoints at four training budgets ($32,128,512,2048$ iterations,
three independent runs each; all $12$ strategy hashes distinct). The
campaign is pre-registered. Budgets, seeds, and the run schedule are fixed
before evaluation and verified by hash, so no opponent is selected by outcome. Each checkpoint is used as the fixed P1 strategy and scored
by exact tree walks ($10$ seeds, $T{=}800$, $\epsmax{=}0.05$). These
opponents are near-equilibrium (oracle gain falls from $0.48$ to $0.03$
chips/hand as the budget grows), so absolute gains are small, but the
ordering is consistent: \csrnr{} captures the most beyond-Nash value
(mean final gain $+0.0065$ over the equilibrium reference, positive on
$11/12$ and never below it), ahead of Fixed-DBR ($+0.0055$, $9/12$),
Fixed-Mix ($+0.0057$, $8/12$), and the binary gate ($+0.0014$, which
never completes a release). Every deployed strategy stays at the
solver-floor certificate ($3\times10^{-4}\ll\epsmax$). This is a
confirmatory robustness check on unscripted opponents, not a
large-gain regime.

\paragraph{Concentrated 5-rank opponents (pre-registered).}
To separate ``the game is larger'' from ``the deviations are diffuse'' as the
cause of the weak 5-rank signal, we pre-registered a concentrated suite: twelve
5-rank opponents, each the exact Nash with one or more isolated pure-action
blunders in a facing-a-bet cell, spanning $\{$preflop, postflop$\}\times\{$over-fold
strong, over-call weak, flat the nuts, single-cell, stacked$\}$. Admission is by
a frozen Phase-A gate (the identical arithmetic and thresholds as the 3-rank
B-path pre-gate: exploitable, $\mathrm{oracle}\ge0.05$; its pool
(near-)zero-data across $N{=}20$ probe records, rate ${\ge}0.5$; and a numeric
blind gap ${\ge}0.05$), committed by SHA-256 before any six-arm run; nine
candidates are recorded as honest negatives, never dropped silently. Only three are admitted. All are board-pair leaks in the only 5-rank pool
rare enough at
$N{=}20$; the six per-card preflop and two overcard leaks are rejected because
their pools \emph{are} reached (zero-data rate from $0.12$ to $0.29$), and flatting the
nuts is sub-threshold ($\mathrm{oracle}=0.046$). On the admitted three the exact
best response captures $1.30/1.30/1.39$ chips/hand, yet \csrnr{} captures $0$: it
never accumulates enough observations of the rarely-reached pair cell to confirm
at $T{=}800$, so it holds Nash at a solver-floor certificate
($8\times10^{-4}\ll0.15$); Fixed-DBR and Fixed-Mix likewise capture $0$, and the
binary gate releases once (on the triple-leak, $+0.059$). Under the matched
empirical-Bernstein detector ($\delta{=}0.05$) the diffuse 5-rank suite is
confirmation-starved at short horizons but recovers with data: \csrnr{} confirms
$0/12$ opponents at $T{=}800$, $1/12$ at $T{=}2400$, and $4/12$ at $T{=}5000$
(the exact best response exploits all twelve, mean $0.38$, at every horizon).
Captured safe gain rises accordingly ($0.000\!\to\!0.002\!\to\!0.007$ chips/hand,
mean cumulative), and every deployed level stays within budget (worst certificate
$0.0008\!\to\!0.073\!\to\!0.138\le0.15$). At $T{=}5000$ \csrnr{} captures the most
budget-respecting beyond-Nash value ($0.007$ versus Fixed-DBR $0.002$); Fixed-Mix's
larger $0.037$ comes at a $13\times$ budget violation (certificate $1.98$), and the
binary gate's five releases dip negative (worst cumulative lower bound $-0.050$).
The 5-rank results consistently show confirmation starvation across leak
geometries. The detection limit recedes with the horizon while certificate
control remains unchanged.

\paragraph{Detection and certification depend on different resources.}
Across the horizon studies one data-limited quantity moves with sample size:
the number of opponents whose deviations separate from the equilibrium
reference. On the diffuse 5-rank suite this rises from $0$ to $1$ to $4$ of
$12$ as $T$ grows to $5000$, and on the main suite the empirical-Bernstein
detector confirms $1$, $5$, and $7$ of $12$ over the same horizons. The
certificate budget is unchanged throughout: every deployed level stays within
$\epsmax$ regardless of how much has been confirmed. Detection governs how much
value is available to capture, the certificate governs how much downside that
capture can cost, and only the former depends on the horizon.

\section{Future Work}
Current exact certificates require a solver-tractable game and a full-tree
best-response pass. A central direction is to extend the certificate to sound
approximate best responses suitable for poker scale
\citep{moravvcik2017deepstack,brown2018superhuman}, where the opponents worth
exploiting now include prompted language-model agents that reach strong play
without training or a solver \citep{li2026pokerskill}. More general pool
construction and additional observation mechanisms may reduce confirmation
starvation, particularly for zero-reach leaks invisible to on-policy data.
Realized-gain rollback may improve profitability after marginal releases, but
would complement rather than replace the played-strategy certificate.

\end{document}